\newcites{main,supp}{References,References}
\def\section{\@startsection{section}{1}
	\z@{1.0\linespacing\@plus\linespacing}{.8\linespacing}{\Large}}
\def\subsection{\@startsection{subsection}{2}
	\z@{.8\linespacing\@plus.7\linespacing}{.7\linespacing}{\large}}
\def\subsubsection{\@startsection{subsubsection}{3}
	\z@{.5\linespacing\@plus.7\linespacing}{-.5em}{\normalfont\bfseries}}
\numberwithin{equation}{section}
\newtheorem{claim}{Claim}[]
\theoremstyle{definition}
\theoremstyle{definition}
\theoremstyle{definition}
\title{}
\begin{document}
	\vspace*{3ex minus 1ex}
	\begin{center}
		\Large \textsc{Choosing The Best Incentives for Belief Elicitation with an Application to Political Protests} 
		\bigskip
	\end{center}
	
	\date{%
		\today%
	}

\date{\today}
\vspace*{3ex minus 1ex}
	\begin{center}
		Nathan Canen and Anujit Chakraborty\\
	
	\bigskip
	
	
	\end{center}
	
	\thanks{We thank Nina Bobkova, Juan Felipe Ria\~no, Sebastian Saiegh, Ko Sugiura and Francesco Trebbi for very helpful comments and suggestions. All errors are our own.\\
	\textbf{Canen:} (Corresponding Author) University of Houston, TX, USA and Research Economist at NBER. email: \url{ncanen@uh.edu} \\
	\textbf{Chakraborty:} University of California - Davis, CA, USA. email: \url{chakraborty@ucdavis.edu}}
	
	\begin{abstract}
Many experiments elicit subjects' prior and posterior beliefs about a random variable to assess how information  affects one's own actions. However, beliefs are multi-dimensional objects, and experimenters often only elicit a single response from subjects. In this paper, we discuss how the incentives offered by experimenters map subjects' true belief distributions to what profit-maximizing subjects respond in the elicitation task. In particular, we show how slightly different incentives may induce subjects to report the mean, mode, or median of their belief distribution. If beliefs are not symmetric and unimodal, then using an elicitation scheme that is mismatched with the research question may affect both the magnitude and the sign of identified effects, or may even make identification impossible. As an example, we revisit \cite{cyyz}'s study of whether political protests are strategic complements or substitutes. We show that they elicit modal beliefs, while modal and mean beliefs may be updated in opposite directions following their experiment. Hence, the sign of their effects may change, allowing an alternative interpretation of their results.
\end{abstract}
	\maketitle
	
\noindent JEL: C81, C93, D74, P00\\
\noindent Keywords: Belief Elicitation, Experimental Designs, Identification, Political Protests
		
	\section{Introduction}
	
Economic theory posits that an individual's actions in a strategic environment depend on their preferences over outcomes and their beliefs about how others would act. 
However, measuring the causal role of beliefs on actions from observational data is inherently difficult. After all, even when actions are observable, beliefs are rarely observed. Meanwhile, imputing beliefs from actions requires presupposing their causal effect, even though that is often the empirical goal itself. 

Experimental methods provide an attractive solution in these contexts. In experiments, researchers can observe actions \textit{and} elicit beliefs, both before and after carefully designed information interventions. For example, within the context of political activism, \citet{cyyz, jarke2021free, hager2022group, hager2022political} all elicit subjects' beliefs about others' planned participation in political events/protests. They then measure how changing subjects' information about others' participation causally affects individual beliefs and, thence, individuals' own attendance.\footnote{Such examples go beyond political economy. In another recent example, \citet{bursztyn} evaluate whether Saudi husbands are more likely to support women working outside the home if they discover that other husbands do so too.}

However, eliciting beliefs is fundamentally different than eliciting simpler variables, such as willingness-to-pay. This is because beliefs are a probability distribution, often defined over a large set of outcomes, rather than just a point response (i.e., a real number between 0 and 1). For example, the belief about the proportion of $N$ other survey-participants who participate in a protest is a probability distribution over the $N+1$ possible values $\{0,1/N, 2/N,..1\}$. Yet, for tractability, researchers often have to elicit point-responses in their experimental design, and interpret those as a coarse measure of beliefs. This is the case in the papers cited above, as well as many others.\footnote{\cite{trebbi, trebbi2} are some exceptions in the experimental political economy context.} 

In this paper, we discuss \emph{the relation between subjects' point-responses and their underlying belief distribution}, and how this mapping depends on incentives offered to the subject. In particular, we compare two popular belief elicitation schemes that seem superficially similar, but, as we show in Section \ref{sec: belief}, incentivize different best responses in belief reporting. We then show the empirical consequences of such differences, which can include identifying effects with opposite signs to the true ones, or a lack of identification altogether. 

The first scheme we consider rewards subjects for correct guesses within an error band of the true value: for instance, ``Please guess $x\in [0,1]$. If your guess is within $\Delta$ percentage points of $x\in [0,1]$ you will earn a bonus payment of 1 currency unit." Recent examples of belief elicitation using this scheme include \cite{cyyz, chenyang, bursztyn}, among others. In Section \ref{sec:mode}, we prove that subjects' best response to these incentives is to report the (approximate) mode of the true distribution of $x$ (henceforth, modal beliefs). The second one, which we recommend to practitioners who wish to elicit the mean of the belief distribution over $x$ (mean beliefs, henceforth), rewards subjects $A-B(x-r)^2$ for reporting $r$, where $A,B$ are constants.  
Such an incentive scheme indeed induces profit-maximizing subjects to report their mean beliefs as a best response. 

The difference in elicitation designs is subtle: rewards under both schemes are weakly increasing in accuracy. But they induce very different mappings from the true belief distribution (over $x$) to the optimal report, as mean and mode do not generally coincide. For the sake of completeness, we also discuss a third incentive scheme, which rewards subjects based on the absolute distance between their report and the true value. This incentivizes reporting the median of the true distribution instead.

We show that these differences can be very consequential when it comes to testing a theoretical prediction. Not only are modes, means and medians generally different, but they may have very different properties following an information intervention. This can be visualized within a simple example, shown in Figure \ref{fig1}. In this example, beliefs about a random variable, $x$, are assumed to follow a flexible continuous distribution, which has been calibrated to our empirical application (described below). In Figure \ref{fig1}, the mode is given by 0.142, while the mean is 0.273. Now, we assume there is an experimental information intervention, like those in the papers cited above. In particular, the treatment reveals that $x$ is equal to 0.17, akin to \cite{cyyz}. Beliefs will then update towards 0.17, implying the mode will increase towards 0.17, while the mean will \textit{decrease} accordingly, as illustrated by the arrows in the figure.

\begin{figure}
	\centering
		\includegraphics[width=\textwidth]{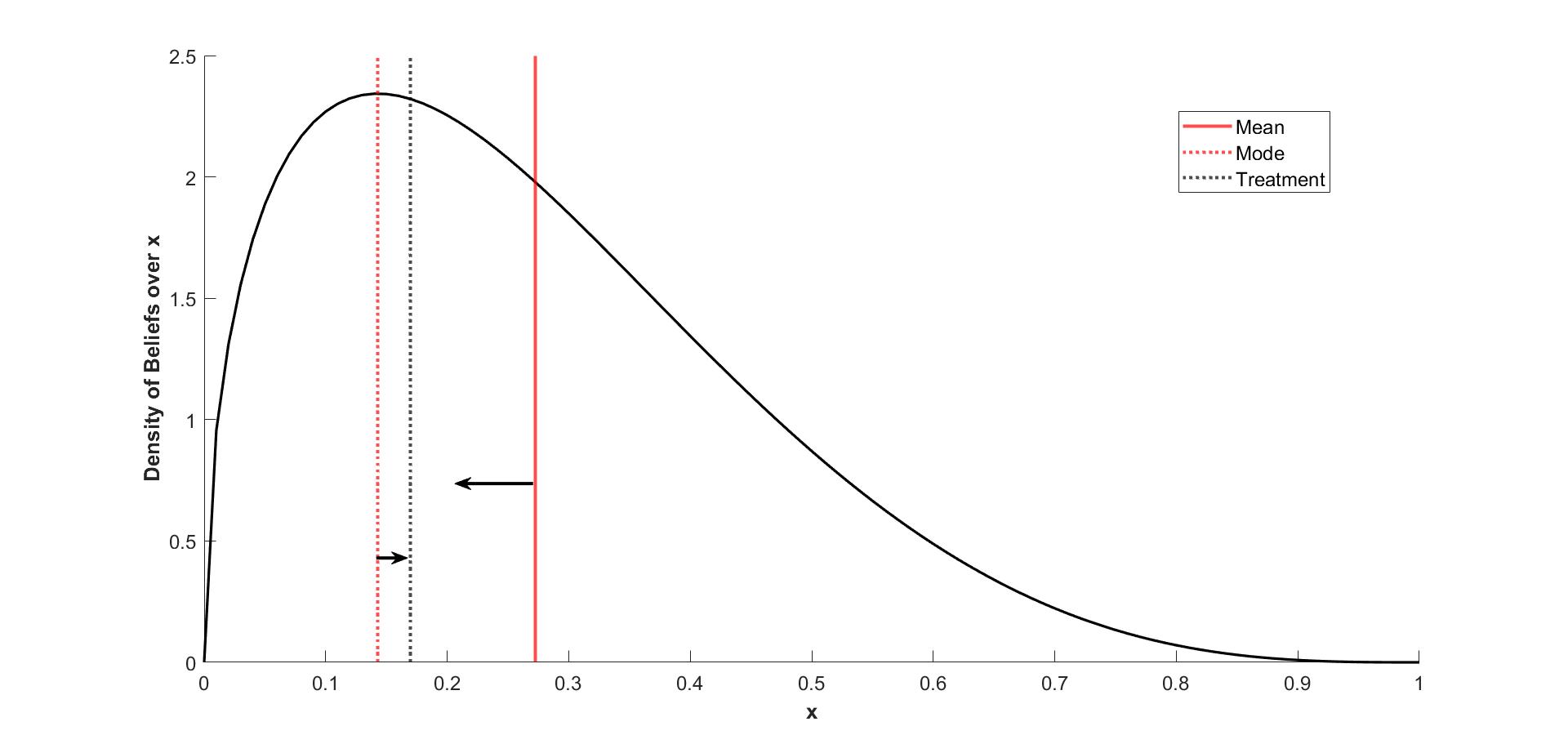}
	\caption{Simple Illustration that Means and Modes are Updated in Opposite Directions}
		\label{fig1}
		\parbox{6.2in}{\footnotesize We present an example of how means and modes can be updated in opposite directions, implying that different belief elicitation schemes may provide different signs for identified effects. In the example, beliefs follow $x\sim Beta(1.5,4)$, where the parameters are calibrated to the empirical application discussed in Section \ref{subsec: quantification}. The mode (dotted red line) is equal to 0.142, while the mean (solid red) equals 0.273. An information intervention (black dotted line) that reveals the true state to be 0.17 is introduced to subjects. Beliefs are updated towards 0.17, which implies that the mode increases, but the mean decreases (shown by the black arrows).}	
\end{figure}

Hence, a researcher who uses incentives to measure modal beliefs but interprets the data as mean beliefs, will have identified a very different measure than they intended to. And using this measure may lead to very different conclusions while testing a theory-derived hypothesis. In fact, as the example above shows, researchers who interpret modal and mean beliefs interchangeably, may (i) fail to identify the theoretical parameter of interest, or may (ii) identify their \textit{opposite sign} (if the parameter is defined by how changes in mean beliefs relate to changes in actions). We formalize these points in Section \ref{why_it_matters}.

As an empirical application, we revisit \cite{cyyz}'s experiment on political protests in Hong Kong.\footnote{We also discuss \cite{hager2022group, jarke2021free, hager2022political} that use a similar research design and where our insights also apply.} \cite{cyyz} conduct a cleverly designed field experiment in the backdrop of an anti-authoritarian protest in Hong Kong. They show that potential protesters who increased their reported beliefs in favor of others being more likely to protest, became less likely to protest themselves. Interpreting the reported beliefs as mean beliefs, \cite{cyyz} conclude that ``specifically, our findings reject many recent models that assume only the possibility of strategic complementarity in the protest decision." (p.1072) That is, they interpret their evidence as contradicting the widespread assumption that political protests are strategic complements (i.e., when a citizen believes that more peers are likely to protest, she should become more likely to protest herself, see \cite{gehlbach}). 

However, we argue that the empirical results in \cite{cyyz} do not necessarily reject strategic complementarity (SC) or the many theoretical models that assume it. Rather, we show that \cite{cyyz}'s evidence can be completely consistent with SC using a two-fold argument. First, \cite{cyyz}'s experimental design uses the first belief-elicitation scheme described above, thereby incentivizing subjects to report their modal beliefs, but not their mean beliefs. Thus, their belief data does not necessarily reveal mean beliefs. Second, we use a stylized theoretical model and closely derived statistical results to show that means and modes can be updated in \textit{opposite directions} in their set-up, as shown in Figure \ref{fig1}. Therefore, their estimates of a negative covariance between belief updating about others' protest attendance and one's own attendance may be flipped.

The distinction borne in our discussion is also empirically meaningful. We perform a numerical exercise where we fit flexible distributions of individual beliefs to the distribution of observed prior and posterior reports in the \cite{cyyz} dataset, assuming subjects were reporting modal beliefs about others' participation. Based on the estimated distribution of beliefs, we find that over a third of subjects would have updated their mean belief in the opposite direction of their reported belief. Therefore, how one interprets ``elicited beliefs'' determines what the data of \cite{cyyz} implies about strategic complementarity.

To be clear, our analysis does not aim to refute or reverse any regression results from \cite{cyyz}. Rather, we use it as a prominent example of how elicitation schemes influence the interpretation of economic data: The scheme used in \cite{cyyz} allows alternative interpretations of their results, and such interpretations are meaningful to what we can learn about the empirical validity of strategic complementarity. Such alternative explanations could have been completely avoided with the alternative mean-eliciting scheme that we propose. None of what we say precludes the possibility that subjects have unimodal symmetric beliefs and, thus, their mean and modal beliefs are identical, thereby preserving their original interpretation in \cite{cyyz}. Instead, we think of our results as clarifying the important role that this implicit assumption (unimodal symmetric beliefs) plays in linking the regression results to the final conclusion, both in their experiment and in others, emphasizing the importance of the design of belief elicitation procedures per se.

\section{Belief Elicitation}\label{sec: belief}

Suppose subjects believe that some outcome $x\in[0,1]$ is distributed according to $f$, a probability density function over $[0,1]$. In \cite{cyyz} and \cite{hager2022group}, for instance, $x$ is the percentage of the participants from the study who plan to participate in a political protest. In \cite{bursztyn}, it is the percentage of Saudi husbands who believe women should be allowed to work outside the home. Suppose researchers are interested in eliciting a single value that is in some way representative of the distribution $f$.

\subsection{Elicitation Schemes that Incentivize Reporting the Mode}\label{sec:mode}

Consider the following general incentive system:

\medskip

\begin{quote}
``Please guess $x\in [0,1]$. If your guess is within $\Delta$ percentage points of $x\in [0,1]$ you will earn a bonus payment of 1 currency unit."
\end{quote} 

\medskip

\cite{cyyz} use a special case of this with $\Delta=2$, and where 1 currency unit was 10 HKD. Similarly, \cite{chenyang} sets $\Delta = 0.1$ and the currency unit was RMB 5. Meanwhile \cite{bursztyn} uses some $\Delta$ close to 0 with strategic uncertainty about its value and US\$20 as a currency unit.\footnote{Technically, the incentives in \cite{bursztyn} award the participant ``who guesses most accurately". This could induce strategic guessing, where a subject anticipates other subjects' guesses when choosing their report. However, with a large enough population of independent subjects (1500 in their sample), payment should reflect being arbitrarily close to the true values ($\Delta \to 0$).}

There exists simple sufficient conditions on $f$ under which, under this elicitation scheme, profit maximizing subjects should report the mode exactly, or approximately with an error band of $\Delta$ percentage points. If $\Delta = 0$ then the mode will trivially be the best response. The interesting case is when $\Delta>0$.

\begin{claim}
\label{claim: mode}
Let $f$ be unimodal and $m=\arg\max_x f(x)$ be its mode. Let $\delta=\Delta/100$.

i) If $f$ is discrete and $\min_{x \in [0,1], y \in [0,1]}\{|x-y|: f(x)>0,f(y)>0\}>\delta$, then the profit-maximizing report is $m$.

ii) If $f$ is differentiable, $f'(x)\leq0$ for all $x\in[0,m]$ and $f'(x)\geq0$ for all $x\in[m,1]$, then the profit-maximizing report lies within $[m-\delta,m+\delta]$ for all $\delta$.

iii) If $f$ is differentiable and there exists an $\epsilon >\delta $ such that $f'(x)\leq0$ for all $x\in[m-\epsilon,m]$, $f'(x)\geq0$ for all $x\in[m,m+\epsilon]$, and $\min_{x\in[m-\epsilon,m+\epsilon]}f(x)\geq\max_{x\notin[m-\epsilon,m+\epsilon]}f(x)$, then the profit-maximizing report lies within $[m-\delta,m+\delta]$.
\end{claim}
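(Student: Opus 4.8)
The plan is to observe that the expected payoff from reporting $r$ under this scheme is $P(r) = \Pr\big(x \in [r-\delta, r+\delta]\big)$, so the profit-maximizing report maximizes the probability mass that $f$ assigns to a window of half-width $\delta$ centered at $r$. Each of the three parts then reduces to showing that, under the stated conditions, this ``sliding window'' mass is maximized when the window is centered (or nearly centered) at the mode $m$. I would handle the three cases separately but with a common monotonicity idea: as the window slides away from $m$, mass on one side is gained while mass on the other is lost, and under the hypotheses the loss dominates.

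For part (i), since any two points in the support of $f$ are more than $\delta$ apart, every window of width $2\delta$ contains at most one support point. Hence $P(r)$ equals $f(x)$ when the window captures support point $x$, and $0$ otherwise; the maximum is therefore $f(m)$, attained by taking $r=m$ (which is feasible since $m \in [0,1]$). I would state this carefully to confirm $r=m$ is an available report and that no window can do better than a single atom of maximal height.

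For part (ii), write $P(r) = F(r+\delta) - F(r-\delta)$ where $F$ is the CDF, so $P'(r) = f(r+\delta) - f(r-\delta)$ (ignoring boundary truncation at $0,1$ for the moment). Global monotonicity of $f$—decreasing on $[0,m]$, increasing on $[m,1]$—I would use to show $P$ is increasing for $r < m$ (both endpoints of the window sit where $f$ is comparatively larger on the left) and decreasing for $r > m$, by comparing $f(r+\delta)$ with $f(r-\delta)$ via the unimodality of $f$: when $r < m - \delta$ both endpoints lie on the decreasing branch so $f(r+\delta) < f(r-\delta)$... wait, that gives $P'<0$, so I need the symmetric-unfolding comparison instead—the cleaner route is to note that the maximizer $r^*$ must satisfy $f(r^*+\delta) = f(r^*-\delta)$ (or be at a boundary), and unimodality forces any such balance point to lie within $\delta$ of $m$, since one of $r^*\pm\delta$ is $\le m$ and the other is $\ge m$. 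Part (iii) is the same argument localized: the extra hypothesis that $f$ restricted to $[m-\epsilon, m+\epsilon]$ dominates $f$ outside that interval guarantees the optimal window cannot ``escape'' to a far region where a secondary bump might otherwise give higher mass, so the global maximizer coincides with the local one from the part-(ii) reasoning applied on $[m-\epsilon, m+\epsilon]$, again landing in $[m-\delta, m+\delta]$.

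The main obstacle I anticipate is the boundary handling when $r \pm \delta$ falls outside $[0,1]$, which breaks the clean identity $P'(r) = f(r+\delta) - f(r-\delta)$ and requires checking that truncation only helps center the optimal window toward $m$ (not away from it); and in part (iii), making rigorous the claim that the domination hypothesis genuinely rules out a distant optimal window—this needs a careful comparison between the mass of the best window inside $[m-\epsilon,m+\epsilon]$ (at least $2\delta \cdot \min_{[m-\epsilon,m+\epsilon]} f$, or better) and the mass of any window overlapping the exterior (at most $2\delta \cdot \max_{x \notin [m-\epsilon,m+\epsilon]} f$ plus whatever sliver of the interval it catches), exploiting $\epsilon > \delta$ so that a window centered outside $[m-\delta,m+\delta]$ is forced to draw a strictly positive fraction of its width from the low-density exterior.
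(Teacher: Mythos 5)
Your treatment of parts (ii) and (iii) is essentially the paper's own argument: write the expected payoff as $\int_{r-\delta}^{r+\delta}f(y)\,dy$, differentiate to get $f(r+\delta)-f(r-\delta)$, use unimodality to force the optimum into $[m-\delta,m+\delta]$, and in (iii) rule out windows that leave $[m-\epsilon,m+\epsilon]$ by the chain $\int_{x-\delta}^{x+\delta}f \le 2\delta\max_{w\notin[m-\epsilon,m+\epsilon]}f(w)\le 2\delta\min_{w\in[m-\epsilon,m+\epsilon]}f(w)<\int_{m-\delta}^{m+\delta}f$, with partially overlapping windows handled the same way. Your mid-course sign confusion in (ii) comes from the claim's hypothesis as printed ($f'\le 0$ below $m$, $f'\ge 0$ above $m$), which would make $m$ a minimum; the paper's proof silently uses the corrected orientation ($f$ increasing up to $m$, decreasing after), and your first-order-condition workaround lands in the same place. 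One small preference: the paper's direct monotonicity statement (payoff strictly increasing while $r+\delta\le m$, strictly decreasing while $r-\delta\ge m$) is cleaner than enumerating balance points $f(r^*+\delta)=f(r^*-\delta)$, which with only weak monotonicity admits spurious solutions in flat regions.

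The genuine gap is in part (i). You claim that because distinct support points are more than $\delta$ apart, ``every window of width $2\delta$ contains at most one support point.'' That inference requires gaps exceeding $2\delta$, not $\delta$: two atoms at distance $d$ with $\delta<d\le 2\delta$ both fit inside $[r-\delta,r+\delta]$ for $r$ chosen between them. Concretely, if $f$ puts mass $0.4$ at $0.3$ and mass $0.35$ at $0.3+1.5\delta$ (remaining mass far away), the hypothesis $\min|x-y|>\delta$ holds, yet reporting $r=0.3+0.75\delta$ captures both atoms for a payoff of $0.75>0.4=f(m)$, so the mode is not the optimal report. Thus your step fails under the stated hypothesis; it becomes correct only if the separation condition is strengthened to exceed $2\delta$ (which does hold in the paper's stylized application, where gaps are $0.05$ and $2\delta=0.04$). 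The paper omits the proof of (i) as ``easy and analogous to (ii),'' so there is nothing to compare against, but you should either flag the needed strengthening of the hypothesis or supply an argument that works with the weaker separation --- and the counterexample above shows no such argument exists.
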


\begin{proof}
See Appendix \ref{appA}.
\end{proof}

Part (i) provides a very simple sufficient condition under which subjects would exactly report the mode when beliefs are discrete. We will use (i) in a stylized model in Subsection \ref{toy} to show that, following the information-treatment, the mode of $P_{-i}$ can be updated in the \textit{opposite} direction to the mean of $P_{-i}$. 

Part (ii) then characterizes the optimal report for the simple case of a continuous distribution where the density function has exactly one peak at the mode. In this case, the subject will (approximately) report the mode. Many density functions are single-peaked. For example, the $Beta$ distribution $f(x;\alpha,\beta)$ is single-peaked for any $\alpha,\beta>1$, a property we show formally in Claim \ref{claim:beta} in Appendix and use in Subsection \ref{subsec: quantification}. Part (iii) proves the same result for the more general case where the density function is only locally single-peaked around the mode.

As a result, the papers cited above which elicit subjects using scheme \ref{claim: mode} will obtain reports of subjects' modal beliefs about others' behavior (e.g., protest attendance, opinions on women's work rights, etc.), rather than their mean beliefs. 

\subsection{Elicitation Schemes that Incentivize Reporting the Mean}
\label{sec:alternative}

We now consider an alternative scheme that would elicit the mean instead, before discussing why these differences matter for empirical research.

Suppose subjects who reported $r$ are paid
$B-A(x-r)^{2}$ if $x \in [0,1]$ is the true/realized value from $f$. $B$ and $A$ are constants chosen by the experimenter with $B>A>0$. 

\begin{claim}
\label{claim: mean}
Let $x \sim f(\cdot)$ and $\bar{x}=\int_{0}^{1}xf(x)dx$. Then $\bar{x}$ is the profit-maximizing report under this elicitation scheme.
\end{claim}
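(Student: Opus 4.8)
The plan is to show that the expected payment, viewed as a function of the report $r$, is maximized at $r=\bar{x}$ by a direct first-order argument. First I would write the subject's objective as
\[
\Pi(r) = \mathbb{E}_f\!\left[B - A(x-r)^2\right] = B - A\int_0^1 (x-r)^2 f(x)\,dx,
\]
which is well-defined and finite since $x\in[0,1]$ and $f$ is a density. Expanding the square gives $\Pi(r) = B - A\bigl(\mathbb{E}_f[x^2] - 2r\,\mathbb{E}_f[x] + r^2\bigr)$, so $\Pi$ is a quadratic in $r$ with leading coefficient $-A<0$, hence strictly concave.

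Next I would take the derivative: $\Pi'(r) = -A\bigl(-2\mathbb{E}_f[x] + 2r\bigr) = 2A(\bar{x} - r)$, which vanishes precisely at $r=\bar{x}$. Because $\Pi$ is strictly concave (or equivalently $\Pi''(r) = -2A < 0$), this stationary point is the unique global maximizer. I would also note that $\bar{x}=\int_0^1 x f(x)\,dx \in [0,1]$, so the unconstrained optimum is feasible and no boundary/Kuhn–Tucker analysis is needed. An alternative, calculus-free phrasing of the same fact is the bias–variance decomposition $\mathbb{E}_f[(x-r)^2] = \operatorname{Var}_f(x) + (\bar{x}-r)^2$, from which it is immediate that the expression is minimized at $r=\bar{x}$, and therefore $\Pi(r)$ is maximized there.

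There is essentially no hard step here: the only things to be careful about are (a) confirming integrability and the legitimacy of exchanging expectation and the (finite, polynomial) algebra, which is trivial on the compact support $[0,1]$, and (b) checking that the maximizer lies in the feasible set $[0,1]$ so that the ``profit-maximizing report'' is indeed $\bar{x}$ rather than a clamped boundary value — which follows since a mean of a distribution supported on $[0,1]$ lies in $[0,1]$. If desired, I would also remark that this scheme is just the classical quadratic scoring rule, so the result is the standard fact that quadratic loss is minimized at the mean; the discrete-distribution case is identical with sums replacing integrals. I would close by noting the contrast with Claim~\ref{claim: mode}: the same monotone-in-accuracy reward structure yields the mean here versus the mode there, which is the paper's central point.
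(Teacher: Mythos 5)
Your proof is correct. Your primary route is a first-order-condition argument: you expand $\Pi(r)=B-A\bigl(\mathbb{E}_f[x^2]-2r\bar{x}+r^2\bigr)$, observe strict concavity in $r$, and locate the unique stationary point at $r=\bar{x}$, checking feasibility since $\bar{x}\in[0,1]$. The paper instead avoids calculus entirely: it writes $(x-r)^2=(x-\bar{x})^2+2(x-\bar{x})(\bar{x}-r)+(\bar{x}-r)^2$, notes the cross term integrates to zero and the first term is independent of $r$, and reads off the minimizer of $(\bar{x}-r)^2$ directly --- this is exactly the bias--variance decomposition you offer as your ``alternative, calculus-free phrasing,'' so you have in effect reproduced the paper's proof as a remark. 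The two arguments are equivalent in substance; the FOC version makes uniqueness and the irrelevance of the boundary explicit (a point the paper leaves implicit), while the decomposition is shorter and transfers verbatim to the discrete case, as you note. No gaps.
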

\begin{proof}
In Appendix \ref{appA}.
\end{proof}

The proof is based on the mean being the minimizer of Mean Squared Error. Though the proof treats $f(\cdot)$ as a continuous distribution, a very similar proof would work even if $f(\cdot)$ were discrete. 

Notice that both incentive schemes presented so far are based on the same concept: they reward subjects depending on how close their guesses are to the true realization of the random variable, $x$. However, the details of those incentives matter: whether the reward scheme is based on a quadratic loss function, or on a near-enough value can elicit completely different reports. Beyond unimodal symmetric distributions, modes and means do not coincide.
This can affect both the magnitude and the sign of the conclusions drawn from such experiments, as we explain in Section \ref{why_it_matters}. Before that, though, we comment on two last, but relevant, elicitation schemes.

\subsection{Elicitation Schemes that Incentivize Reporting the Median}
\label{sec: median}

To reinforce our point about how incentive schemes matter, consider a slightly different incentive scheme where subjects who reported $r$ are paid
$B-A|x-r|$ if $x \in [0,1]$ is the true proportion of participants' outcomes, and $B$ and $A$ are constants chosen by the
experimenter with $B>A>0$. Even this small change to the reward function (from square to absolute value) ends up eliciting a different point of the distribution, in this case, the median. We show this separately for discrete and continuous distributions as a discrete distribution allows multiple medians. 

\begin{claim}
\label{claim: median}
i) If $f$ is differentiable and if $\int_{0}^{M} f(s)ds=\int_{M}^{1} f(s)ds$, then $M$ is the profit-maximizing report.

ii) If $f$ is discrete and for any $M$ such that $\sum_{s_i \leq M} f(s_i) = \sum_{s_i \geq M} f(s_i)$, $M$ is a profit-maximizing report.

\end{claim}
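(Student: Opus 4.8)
The plan is to reduce both parts to one convex minimization problem and then read off the minimizer from convexity. Since $B$ and $A>0$ are constants chosen by the experimenter, a profit-maximizing subject picks the report $r$ that maximizes $\mathbb{E}_f\!\left[B-A|x-r|\right]=B-A\,\mathbb{E}_f\!\left[|x-r|\right]$, which is equivalent to minimizing the expected absolute deviation $g(r):=\mathbb{E}_f\!\left[|x-r|\right]$ over $r\in[0,1]$. So in each case it suffices to show that the value $M$ in the hypothesis is a global minimizer of $g$.

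For part (i) I would split the integral at $r$, writing $g(r)=\int_0^r (r-x)f(x)\,dx+\int_r^1 (x-r)f(x)\,dx$, and differentiate under the integral sign by the Leibniz rule; the two boundary terms vanish because the integrand is continuous in $x$ and equals $0$ at $x=r$. This gives $g'(r)=\int_0^r f(x)\,dx-\int_r^1 f(x)\,dx$ and hence $g''(r)=2f(r)\ge 0$, so $g$ is convex on $[0,1]$. The first-order condition $g'(r)=0$ is exactly $\int_0^r f=\int_r^1 f$, which by hypothesis holds at $r=M$; convexity then upgrades this stationary point to a global minimizer, so $M$ is the profit-maximizing report.

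For part (ii), $g(r)=\sum_i f(s_i)\,|s_i-r|$ is a finite nonnegative combination of convex, piecewise-linear functions, hence itself convex and piecewise linear — which is already why a whole interval of optimal reports (multiple medians) can appear. Rather than appeal to one-sided derivatives, I would argue directly: for any $r>M$, split the support at $M$ and use that $s_i\mapsto|s_i-r|-|s_i-M|$ equals $r-M$ when $s_i\le M$ and is at least $-(r-M)$ when $s_i>M$ (the map $t\mapsto|t-r|$ is $1$-Lipschitz). Summing yields $g(r)-g(M)\ge (r-M)\big[\sum_{s_i\le M}f(s_i)-\sum_{s_i>M}f(s_i)\big]$, and the bracket equals $f(M)\ge 0$ by the hypothesis $\sum_{s_i\le M}f(s_i)=\sum_{s_i\ge M}f(s_i)$ together with $\sum_{s_i\ge M}f(s_i)=\sum_{s_i>M}f(s_i)+f(M)$. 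A symmetric computation handles $r<M$. Hence $g(r)\ge g(M)$ for every $r$, so each such $M$ is a profit-maximizing report.

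I do not expect a serious obstacle: the only points needing care are justifying differentiation under the integral in (i) — for which the differentiability, hence continuity, of $f$ assumed in the statement suffices — and the bookkeeping at the support point $M$ itself in (ii), where $f(M)$ sits in both one-sided sums. If one wants to bypass the regularity issue entirely, the Lipschitz-splitting argument used for (ii) also proves (i) verbatim, making the two parts fully parallel at the cost of a little extra algebra.
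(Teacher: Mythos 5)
Your proof is correct. For part (i) you follow essentially the same route as the paper: split the expected absolute deviation at $r$, differentiate to get $g'(r)=\int_0^r f-\int_r^1 f$, and conclude that the hypothesis is exactly the first-order condition at $M$. The paper asserts the resulting casewise monotonicity of $h$ directly (decreasing below $M$, increasing above), which implicitly uses the same fact you make explicit via $g''(r)=2f(r)\ge 0$; making convexity explicit is a small but genuine tightening, since it is what upgrades the stationary point to a global minimizer. For part (ii) your argument is genuinely different. The paper computes the increment $h(r+d)-h(r)=d\bigl(\sum_{i\le k}f(s_i)-\sum_{i>k}f(s_i)\bigr)$ on each interval between consecutive support points and then reads off piecewise monotonicity from the sign of the cumulative-mass difference. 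You instead compare $g(r)$ to $g(M)$ in one shot, using that $t\mapsto|t-r|$ is $1$-Lipschitz to bound $|s_i-r|-|s_i-M|$ below by $r-M$ on $\{s_i\le M\}$ and by $-(r-M)$ on $\{s_i>M\}$, so that $g(r)-g(M)\ge (r-M)\,f(M)\ge 0$; the bookkeeping of $f(M)$ appearing in both one-sided sums of the hypothesis is handled correctly. Your version buys directness (no case-by-case traversal of the support, no need to first dispose of $r\le s_1$) and sidesteps a couple of notational slips in the paper's computation (the paper's displayed $h(r)$ drops the absolute values, and its final statement about where $h$ is minimal is phrased in terms of a function $M(x)$ that is only piecewise constant); the paper's version buys a more explicit picture of the piecewise-linear structure and of why an entire interval of medians can be optimal. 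Your closing observation that the Lipschitz-splitting argument also proves (i) verbatim is accurate and would unify the two parts.
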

\begin{proof}
In Appendix \ref{appA}.
\end{proof}

\subsection{Experimental Designs Without Incentivized Reporting}\label{sec:noinc}

Certain experimental designs do not provide any incentives for subjects when they report their beliefs about others' protest attendance. For example, \cite{hager2022group} simply ask ``How many protesters do you think will attend the AfD protest?" and ``how many people will participate in the respective protests?", without rewards for correct guesses. Similar questions are found in \cite{jarke2021free, hager2022political}, who do not reward subjects for correct responses either. The authors explain that incentives could ``undermine this code [of honor for truthful answers] and make non-truthfulness salient in the first place" (\cite{jarke2021free}) and that ``the campaign were concerned that incentives would be perceived as very unusual by their supporters." (\cite{hager2022political}). 

While such reservations from institutional partners are natural, and while we understand that economists are forced to accept such constraints when working in the field, we also want to draw attention to the following point: Without incentives, it is no longer possible to map a subject's point-reports to their belief distribution. Recall our setting where subjects believe that some outcome $x\in[0,1]$ is distributed according to $f$. Absent any rewards, subjects' best responses to the elicitation questions above are not well-defined: Even rational subjects could be reporting the mode (as in \ref{sec:mode}), the mean (as in \ref{sec:alternative}), the median (as in \ref{sec: median}), or possibly, any other values within their support of beliefs. Perhaps, if beliefs were unimodal and symmetric, then subjects might find it "natural" to report the mean. Yet, this again requires using a strong assumption on an unobserved distribution. 

\section{Mean vs Mode: Why the Belief Elicitation Scheme Matters}\label{why_it_matters}

\subsection{Unobservability of Mean Beliefs in Theory-Driven Specifications}

		
To link the belief elicitation schemes above to actions/outcomes, consider an environment where an individual $i$ has a binary decision (e.g., to attend a protest or not, or whether to turn out to vote). This binary decision is denoted $P_i \in \{0,1\}$. Denote $P_{-i}$ as the proportion of other individuals choosing $P_i =1$, and the expected belief of $i$ over the latter as $\mathbb{E}_i P_{-i}$. For simplicity, let us write this best response as a linear specification, which is consistent with most of the empirical specifications cited above\footnote{It is clear that this linearity is only for exposition and to be closer to the linear and additive empirical specifications, as it can be derived from fundamentals of the game environment, such as linear-quadratic preferences.}:
		\begin{equation} \label{eq:participation}
	P_i = \delta_1 + \delta_2 \mathbb{E}_i P_{-i} + \varepsilon_i
		\end{equation}
The most common informational interventions cited in the previous sections manipulate the belief distribution over $P_{-i}$ to then manipulate $\mathbb{E}_i P_{-i}$ and identify $\delta_2$. For instance, \cite{bursztyn} reveals the share of Saudi husbands who reported that their wives should work outside the home, while \cite{cyyz} report the share of Hong Kong students who planned to attend a protest. The target parameter is $\delta_2$, which measures the extent to which individuals react to information about others' behavior. In particular, if $\delta_2>0$, it is a game of strategic complements, while if $\delta_2<0$, it is one of strategic substitutes. 

The participation decision in equation \ref{eq:participation} is based on the \textit{mean of $i$'s belief distribution}, which is consistent with expected-utility maximizing agents. 
However, if the elicitation task from \ref{sec:mode} is used, then subjects would report the \textit{mode of their belief distribution over $P_{-i}$}, either exactly or approximately, and not the mean of their beliefs over $P_{-i}$. Thus, $\mathbb{E}_i P_{-i}$ is not generally observed in the subjects' responses to the researchers' questions. Hence, unless the belief distribution has an identical mean and mode, $\delta_2$ is not identified.

\subsection{Means and Modes Can Be Updated in Opposite Directions}	

Now, suppose that a researcher elicits modal beliefs through an elicitation scheme in Section \ref{sec:mode}, but interprets it as though it was the mean belief with the goal of identifying $\delta_2$. It turns out that, if the mean and mode are actually not identical, then the researcher may not only fail to identify the magnitude of their desired effect, but may fail to identify the sign of the effect of interest.

To see this visually, we return to Figure \ref{fig1}. 
The information intervention that reveals $x=0.17$ increases modal beliefs about $x$, while mean beliefs about $x$ decrease.
Claim \ref{claim: mean-mode} in Section \ref{subsec: quantification} formalizes sufficient conditions for this class of distributions upon which, if subjects are reporting modal beliefs, while the econometrician interpreted it as mean beliefs, the identified sign of $\delta_2$ in equation \ref{eq:participation} would be the opposite of the true value.

\section{Political Protests as Strategic Complements: An Empirical Application}\label{empirical}

In this section, we revisit the empirical evidence in \cite{cyyz} in light of the belief elicitation schemes discussed in Sections \ref{sec:mode} and \ref{sec:alternative}. 

\cite{cyyz} motivate their experimental design as being able to identify whether political protests are strategic complements (i.e., when a citizen believes more agents are likely to protest, she should become more likely to protest herself) or strategic substitutes (e.g., when citizens have an incentive to free-ride on the costly participation of others, as in \citealp{tullock1971,palfrey1984,olsonlogic}). This maps into whether $\delta_2$ in equation (\ref{eq:participation}) is positive or negative.

The strategic complementarity assumption has been widely used in recent models (e.g., \citealp{mesquita,passarelli,barbera}), leading \cite{gehlbach} to write that strategic complementarity (SC, henceforth) ``characterizes mass protest" (p.579).\footnote{It is also intuitive, as it can be rationalized if citizens enjoy an increased benefit from attending a larger protest through an increased likelihood of a regime change, or through increased social interactions from such behavior, or through a lower cost (one is less likely to be arrested amongst a larger crowd if there is a crackdown by the incumbent regime).} However, whether protests are strategic complements or substitutes has important welfare effects: political protests are strongly associated with regime changes (see \cite{chenoweth}), but policy implications would differ whether protests are strategic substitutes or complements. 

Within this context, \cite{cyyz} conduct an important field experiment among students in Hong Kong. They find that potential protesters who were presented with the information that others were more likely to protest than their previously reported beliefs (elicited using the scheme in Section \ref{sec:mode}), became less likely to protest themselves. \cite{cyyz} conclude that this is evidence against models that solely assume that protests are strategic complements. (p.1072)

We now show that \cite{cyyz}'s evidence can actually be completely consistent with SC, thereby providing an alternative interpretation of their results. Our arguments are based on: (i) that their belief elicitation scheme satisfies Claim \ref{claim: mode}, thereby eliciting modal beliefs, (ii) \cite{cyyz} still interpret reported beliefs as mean beliefs (using an analogue to equation \ref{eq:participation}), even though modes and means can be updated in opposite direction. Hence, they may identify an effect with opposite sign of their desired one. 

We first overview their experiment, before illustrating these points using a stylized model and numerical exercises.

\subsection{The Original Experiment}\label{original_exp}

The experiment in \cite{cyyz} consists of three parts whose relevant features are overviewed here. Further details can be found in Appendix \ref{appB}.

In Part 1 of the experiment, one week before the planned protest, subjects were asked:
\begin{quotation}
(i) \textit{``Are you planning to participate in the July 1st march in 2016?"}\\

\noindent (ii) \textit{``Please guess what percentage of the participants from HKUST of this study plan to participate in the July 1st march in 2016...If your guess is within 2 percentage points of the percent of students who actually answer either ``Yes" or ``Not sure yet, but more likely than not," you will earn a bonus payment of HK\$10."}
\end{quotation}

Question (i) elicits the ex-ante decision for each subject, and the authors find that 17\% of subjects plan to attend the protest. Meanwhile, question (ii) elicits their (reported) prior belief about others' participation. As alluded to previously, it is a special case of the elicitation scheme in Claim \ref{claim: mode}.

In Part 2 of the experiment (one day before the protest), all subjects were reminded of their Part 1 answers to (i) and (ii) and allowed to update the responses. However, a random subset of subjects (i.e, the ``Treated" group) was given the information that 17\% of subjects in the previous survey were planning to attend the protest. Part 3 of the experiment was conducted after the protest, and asked subjects to self-report whether they turned out ($P_i = 1$) or not ($P_i = 0$).

The authors estimate an analogue to equation \ref{eq:participation}. Their target parameter is $\delta_2$, as they are in interested in how changes to one's beliefs affect participation decisions.
 
\subsection{A Stylized Model: General Description}\label{toy}

In Appendix \ref{subsec:counter}, we provide a stylized model where we work through what an experimentalist would observe in terms of elicited beliefs and participation decisions, if subjects maximized their expected utility in \cite{cyyz}'s set-up. Here, we give a general outline of the model's set-up and insights.

Subjects in the model make two choices. They attend the protest ($P_i = 1$), or not ($P_i = 0$), and they choose some value to report, $r_i$, about others' participation, given their beliefs and the researchers' incentives described in Section \ref{original_exp}. As a rhetorical device, we construct subjects' preferences to satisfy SC: i.e., their preferences to attend are increasing in others' participation levels, $P_{-i}$. There is a cost of $0.165$ to attend the protest, so that $i$ wants to attend if \textit{the mean of others' attendance} is larger than $16.5\%$.  Critically, given the incentives used for belief-elicitation, even though the actual decision to protest depends on mean beliefs, the decision on what to report depends on modal beliefs.

Subjects can belong to two groups, which only differ in their distribution of beliefs over $P_{-i}$. Both groups have prior mean beliefs below $0.165$, so neither wants to attend the protest ex-ante. However, Group 1 has a high modal belief over $P_{-i}$, while Group 2 has low modal beliefs. Following the global games literature (see \cite{morris} for an overview), we model the information intervention (Part 2 of the experiment) as a noisy signal which has a distribution centered around the true value (17\%, based on the information intervention in \cite{cyyz}). This information makes subjects update their beliefs, which yields a posterior distribution from which we calculate posterior means and modes over others' attendance. Prior and posterior means and modes in this stylized environment are summarized in Table \ref{summary}.

\begin{table}[htbp!]
\begin{center}
\caption{Summary of Mean and Modal Beliefs in the Example}
\label{summary}
\begin{tabular}{ c c c c } 
 \hline
& & Prior & Posterior\\
\hline
Group 1 ( $b^1$) & mean & 0.1645 & 0.17 \\ 
  & mode & 0.30 & 0.17 \\ 
	\hline
Group 2 ( $b^2$) & mean & 0.10 & 0.127 \\ 
& mode & 0.05 & 0.17\\
 \hline
\end{tabular}
\end{center}
\end{table}

Suppose that subjects reveal their true intentions when asked about their willingness to protest. Then, following the intervention, only Group 1 subjects would express a willingness to participate in the protest following treatment (as their posterior mean belief is above the cost, 0.165). However, the elicitation scheme in Section \ref{original_exp} satisfies Claim \ref{claim: mode}, so subjects should report their modal beliefs. This leads to the following two results (Claims \ref{claim: opp_dir} and \ref{claim: control-vs-treated}) that would seem to violate SC if one interpreted ``elicited beliefs" as mean beliefs. 

\begin{claim}
\label{claim: opp_dir}
 Despite SC, the group of subjects whose ``elicited beliefs'' went down from 0.30 to 0.17 want to
participate in the protest. The group of subjects whose ``elicited beliefs''
went up from 0.05 to 0.17, do not want to participate.
\end{claim}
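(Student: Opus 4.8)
The plan is to reduce the claim to two facts and then read the numbers off Table~\ref{summary}. Fact one: under the belief-elicitation incentives of Section~\ref{original_exp}, a profit-maximizing subject reports the \emph{mode} of her distribution over $P_{-i}$. Fact two: in the stylized model of Appendix~\ref{subsec:counter}, a subject strictly prefers to protest if and only if the \emph{mean} of her distribution over $P_{-i}$ exceeds the cost $0.165$. Granting these, the group whose reported (``elicited'') belief moves from $0.30$ to $0.17$ is, by Fact one, Group~1, whose posterior mean is $0.17>0.165$, so by Fact two they want to participate; the group whose reported belief moves from $0.05$ to $0.17$ is Group~2, whose posterior mean is $0.127<0.165$, so they do not. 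That is exactly the claim.

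Fact one is immediate from Claim~\ref{claim: mode}: the incentive in Section~\ref{original_exp} is precisely that scheme with $\Delta=2$, i.e. $\delta=0.02$, so it suffices to specify each group's prior and posterior over $P_{-i}$ as discrete distributions whose support points are more than $0.02$ apart, whence Claim~\ref{claim: mode}(i) makes the unique best report equal to the mode, so ``elicited beliefs'' coincide with the mode column of Table~\ref{summary}. Fact two is built into the preference specification of the stylized model: with a benefit that is increasing (linearly) in $P_{-i}$ and a protest cost of $0.165$, the difference between expected utility from protesting and from abstaining has the sign of $\mathbb{E}_i P_{-i}-0.165$, so the participation decision is governed by the mean, while — because of the elicitation incentives — the reported value is governed by the mode.

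The substantive work, carried out in Appendix~\ref{subsec:counter}, is to exhibit priors and an information structure generating all eight entries of Table~\ref{summary} simultaneously. Since Bayesian updating of a discrete prior leaves its support unchanged, I would put $0.17$ in the support of both priors. For Group~1, take a left-skewed discrete prior $b^1$ whose largest atom is at $0.30$ (so $0.30$ is the mode) with enough mass at lower support points to bring the mean down to $0.1645$; for Group~2, take a right-skewed discrete prior $b^2$ with largest atom at $0.05$ and mean $0.10$. Following the global-games treatment of the intervention (\cite{morris}), Part~2 delivers a noisy signal whose distribution is centered at the true value $0.17$, and posteriors come from Bayes' rule; choosing the signal precision so that the likelihood at $0.17$ dominates enough to make $0.17$ the posterior-modal atom for both groups, while the posterior \emph{means} move only partway — to $0.17$ for Group~1 (whose prior mean already sat essentially at the signal) but only to $0.127$ for Group~2 (whose prior mean $0.10$ lies far below the signal) — reproduces Table~\ref{summary}. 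Verifying this is a finite computation.

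The main obstacle is feasibility of the Group~1 prior on $[0,1]$: it must carry its modal atom at $0.30$ yet have a mean of only $0.1645$, which sits just below the cost $0.165$, so the low-value mass and the atom at $0.30$ have to be nearly balanced — one must check that an honest probability vector on the chosen grid achieves this while keeping $0.30$ the strict maximizer of the pmf, and that the common signal precision then places the posterior mean at exactly $0.17$ for Group~1 and $0.127$ for Group~2 (a joint constraint on the two priors and the one precision parameter). Once such a construction is pinned down, the rest is arithmetic: Claim~\ref{claim: mode}(i) lets one read the reported beliefs off the mode column ($0.30\to 0.17$ and $0.05\to 0.17$), and comparing the two posterior means with $0.165$ yields the participation decisions — which are the reverse of what the movement in reported beliefs would suggest if one (wrongly) read reports as means, even though the model's preferences genuinely satisfy strategic complementarity.
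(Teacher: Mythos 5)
Your reduction is exactly the paper's: Claim \ref{claim: mode}(i) makes the ``elicited belief'' equal to the mode (the supports of $b^1,b^2$ in Appendix \ref{subsec:counter} are spaced more than $\delta=0.02$ apart), the utility \eqref{eq: utility} makes participation equivalent to $\mathbb{E}_i P_{-i}>0.165$, and the claim then follows by comparing the posterior means $0.17$ and $0.127$ with the cost $0.165$. So the logical skeleton is correct and matches the paper.

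The gap is in the construction you defer. First, your premise that ``Bayesian updating of a discrete prior leaves its support unchanged'' is false whenever the likelihood vanishes at some support points, and the paper's construction relies on exactly that failure: the signal is uniform on $[\theta-10,\theta+10]$, so observing $17\%$ zeroes out the prior atoms at $0$ and $0.30$ for Group~1 and at $0.05$ for Group~2, and the posteriors are simply the priors truncated to $\{0.12,0.17,0.22\}$ and $\{0.08,0.12,0.17\}$ and renormalized. That truncation is what simultaneously moves both modes to $0.17$ and pushes Group~1's mean up from $0.1645$ to exactly $0.17$. With the full support retained, your route of ``choosing the precision so the likelihood at $0.17$ dominates'' would have to overcome the prior odds of $0.35$ at $0.30$ versus $0.15$ at $0.17$ and still hit the two posterior means exactly --- a joint constraint you flag but do not resolve. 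Second, the feasibility worry about Group~1's prior is not actually delicate: the paper exhibits $b^1$ with atoms $(0,0.12,0.17,0.22,0.30)$ and weights $(0.30,0.10,0.15,0.10,0.35)$, whose mean is $0.1645$ and whose strict mode is $0.30$. Writing down these two priors and the uniform-noise signal, then doing the four lines of arithmetic, \emph{is} the proof; without them the specific numbers in the claim are asserted rather than established.
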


Suppose that a part of Group 1 was randomly assigned to a control group (who did not receive information intervention), and the rest of Group 1 to a treatment group (who received the information intervention). 

\begin{claim}
\label{claim: control-vs-treated}
The control group would have high ``elicited prior beliefs'' of 0.30 for both posterior and prior. The treatment group's ``elicited beliefs'' would go down from 0.30 (prior) to 0.17 (posterior). Still, only the treatment group would want to participate in the protest. An observer who interprets reports as mean beliefs, would interpret this evidence to reject SC, even though SC holds by construction.
\end{claim}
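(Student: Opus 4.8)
The plan is to derive the claim from three inputs already available: (a) Claim \ref{claim: mode}(i), which — because \cite{cyyz}'s incentives are the special case $\Delta=2$ and beliefs over $P_{-i}$ live on the discrete grid $\{0,1/N,\dots,1\}$ with spacing exceeding $\delta=0.02$ — guarantees that every subject's elicited report equals the \emph{mode} of her current belief distribution; (b) the prior/posterior means and modes for Group 1 recorded in Table \ref{summary} and constructed in Appendix \ref{subsec:counter}; and (c) the observation that assignment to the control arm is a randomization that changes nothing: a control subject receives no signal in Part 2, so her posterior equals her prior, and hence every functional of her belief — in particular its mode and its mean — is unchanged between the two elicitations.

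\textbf{Step 1 (control arm).} Since the control group is a random subsample of Group 1, its prior belief distribution over $P_{-i}$ is that of Group 1, so its prior mode is $0.30$ and its prior mean is $0.1645$. With no information delivered in Part 2, Bayesian updating is vacuous, so the posterior mode is again $0.30$ and the posterior mean again $0.1645$. By Claim \ref{claim: mode}(i) the control group reports $0.30$ in both elicitations. Because the attendance decision is governed by the \emph{mean} belief and the cost of attending is $0.165>0.1645$, no control subject wants to attend, ex ante or ex post.

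\textbf{Step 2 (treatment arm).} Invoking the posterior computation in Appendix \ref{subsec:counter} (summarized in Table \ref{summary}): after receiving the signal centered at $0.17$, a Group 1 subject's posterior mean rises to $0.17$ and her posterior mode moves to $0.17$. By Claim \ref{claim: mode}(i), the treatment group's report therefore falls from $0.30$ (prior mode) to $0.17$ (posterior mode). Since $0.17>0.165$, every treatment subject now wants to attend. \textbf{Step 3 (the observer's inference).} An econometrician who plugs reported beliefs into equation \ref{eq:participation} as $\mathbb{E}_i P_{-i}$ sees, in the treatment arm, a belief change of $0.17-0.30=-0.13$ together with a participation change from $0$ to $1$, and, in the control arm, zero change in both. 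The sample covariance between the change in reported belief and the change in participation is strictly negative, so the least-squares estimate of $\delta_2$ is negative — the textbook signature of strategic substitutes, read as a rejection of SC. Yet the model is built so that each subject's utility from attending is strictly increasing in $P_{-i}$ (SC holds by construction), and treatment subjects switch to attending precisely because their \emph{true mean} belief rose past the cost threshold $0.165$ — behavior fully consistent with SC. The clash between the estimated sign and the true structural property is the content of the claim.

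\textbf{Main obstacle.} Steps 1 and 3 are bookkeeping once Claim \ref{claim: mode}(i) and Table \ref{summary} are granted. The real work sits behind Table \ref{summary}, i.e.\ in Step 2: one must exhibit an explicit prior on the grid together with a signal distribution centered at $0.17$ such that simultaneously (i) the prior mean is just below $0.165$ while the prior mode is well above it (here $0.30$), (ii) the grid spacing exceeds $\delta=0.02$ so that Claim \ref{claim: mode}(i) applies to both the prior and the posterior, and (iii) updating on the signal pulls the mean up to $0.17$ while relocating the modal mass onto $0.17$. Verifying that a single prior/signal pair meets all of (i)--(iii) is the one place where genuine construction, rather than accounting, is required; this is carried out in Appendix \ref{subsec:counter}.
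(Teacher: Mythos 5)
Your proposal is correct and follows essentially the same route as the paper: the paper offers no separate formal proof of this claim, but derives it exactly as you do, from the explicit prior/posterior construction in Appendix \ref{subsec:counter} (summarized in Table \ref{summary}) together with Claim \ref{claim: mode}(i) applied to supports whose spacing exceeds $\delta=0.02$, with the control arm unchanged and the treated arm's reported mode falling to $0.17$ while its mean rises above the $0.165$ cost. Your Step 3 covariance framing of the observer's inference and your itemization of conditions (i)--(iii) are slightly more explicit than the paper's prose, but they are elaborations of the same argument rather than a different one.
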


By comparison, this misinterpretation would not occur if one used the elicitation scheme in Claim \ref{claim: mean}, as both groups would have reported mean beliefs, which increase following treatment. However, if no incentives were provided to subjects (as in Section \ref{sec:noinc}), then we cannot derive best responses for $r_i$ in this example.

\subsection{Means vs. Modes in \cite{cyyz}: A Quantification}\label{subsec: quantification}

While we cannot truly know whether beliefs were asymmetric and whether means and modes were updated in opposite directions (and even whether modes themselves were reported), we can illustrate that this concern may be quantitatively meaningful. To do so, we extend the previous example to a more easily quantifiable set-up, where we can provide sufficient conditions under which the mean and mode beliefs would be updated in opposite directions. 

Let the true proportion of students planning to attend the protest be $x \in [0,1]$. Assume that each subject, $i$, has beliefs over $x$ that follow a $Beta(\alpha_i,\beta_i)$ distribution, where $\alpha_i, \beta_i$ are parameters larger than 1.\footnote{\label{beta}The $Beta$ distribution has several advantages in this set-up. First, its support is bounded between $[0,1]$, just like the share of participation turnout. Second, it can be symmetric ($\alpha_i=\beta_i$) with mean and mode being equal, or asymmetric, allowing us to test our mechanism which relies on asymmetric beliefs. Third, its mode, mean and variances have closed form solutions. Fourth, for a $Beta$ distribution, condition (ii) from Claim \ref{claim: mode} holds, and hence subjects should be reporting values approximately equal to the prior and posterior modes.  Finally, it provides the convenience of being the conjugate prior when the likelihood is binomial/Bernoulli, which would be useful to model how subjects update their priors after the particular information intervention used in \cite{cyyz}. The latter is used explicitly in our proof of Claim \ref{claim: mean-mode}.}

The information intervention in \cite{cyyz} exposes subjects to the true level of planned participation of 1234 subjects, and in the experiment $\hat x=17\%$ of the 1234 subjects planned to protest. We assume that subjects interpret this information as a Binomial draw from the true, but unobserved, realized proportion $x^*$ with sample size $n = 1234$. Then:

\begin{claim}
\label{claim: mean-mode}
Any subject who started with a belief distribution such that the mean $\mu_i>0.1707$ and mode $m_i<0.1698$ (or, $m_i>0.1707$ and $\mu_i<0.1698$) would always update their mean and mode in opposite directions after observing $\hat x=0.17$.
\end{claim}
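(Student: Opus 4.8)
The plan is to exploit the conjugacy between the $Beta$ prior and the binomial likelihood, which collapses the whole question to a comparison of the prior mean and the prior mode with a single target value near $0.17$. Subject $i$'s prior is $Beta(\alpha_i,\beta_i)$; the information is read as $k$ ``successes'' out of $n=1234$ independent draws, where $k$ is the integer count of planned participants and $\hat x=k/n$ rounds to $0.17$. Conjugacy then gives a posterior $Beta(\alpha_i+k,\ \beta_i+n-k)$, whose shape parameters both exceed $1$ (since $\alpha_i,\beta_i>1$ and $0\le k\le n$), so both prior and posterior have genuine interior modes and Claim~\ref{claim: mode}(ii) applies to the values subjects report. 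First I would write down the four closed forms
\[
\mu_i=\frac{\alpha_i}{\alpha_i+\beta_i},\qquad m_i=\frac{\alpha_i-1}{\alpha_i+\beta_i-2},
\]
\[
\mu_i'=\frac{\alpha_i+k}{\alpha_i+\beta_i+n},\qquad m_i'=\frac{\alpha_i+k-1}{\alpha_i+\beta_i+n-2}.
\]

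The key algebraic step is to rewrite each posterior quantity as a \emph{strict convex combination} of the corresponding prior quantity and $\hat x$:
\[
\mu_i'=\frac{\alpha_i+\beta_i}{\alpha_i+\beta_i+n}\,\mu_i+\frac{n}{\alpha_i+\beta_i+n}\,\hat x,
\qquad
m_i'=\frac{\alpha_i+\beta_i-2}{\alpha_i+\beta_i+n-2}\,m_i+\frac{n}{\alpha_i+\beta_i+n-2}\,\hat x .
\]
Both weights on the prior quantity lie strictly in $(0,1)$, using $\alpha_i+\beta_i>2$ and $n>2$. Hence $\operatorname{sign}(\mu_i'-\mu_i)=\operatorname{sign}(\hat x-\mu_i)$ and $\operatorname{sign}(m_i'-m_i)=\operatorname{sign}(\hat x-m_i)$: the posterior mean moves strictly toward $\hat x$, and so does the posterior mode. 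Therefore the mean and the mode are updated in opposite directions precisely when $\hat x$ lies strictly between $\mu_i$ and $m_i$.

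To finish I would translate ``$\hat x=0.17$'' into the interval appearing in the statement. Because $0.17\times 1234=209.78$ is not an integer, the exact fraction the subjects are shown is $k/1234$ for the integer $k$ nearest $209.78$ (namely $k=210$), which gives $\hat x=210/1234\in(0.1698,0.1707)$; the same interval also contains the literal value $0.17$, so the conclusion is robust to how one reads the figure. Consequently, if $\mu_i>0.1707\ (\ge\hat x)$ and $m_i<0.1698\ (\le\hat x)$ then $m_i<\hat x<\mu_i$, so the mode strictly increases while the mean strictly decreases; the symmetric hypothesis $m_i>0.1707$ and $\mu_i<0.1698$ gives the reverse pair. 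This is exactly the claim.

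I expect the only real obstacle to be bookkeeping rather than ideas: one must verify that the convex-combination identity for the \emph{mode} is legitimate (it needs the posterior shape parameters to exceed $1$ and $\alpha_i+\beta_i>2$, both guaranteed by $\alpha_i,\beta_i>1$ together with $0\le k\le n$), and one must pin down the small, slightly asymmetric window $[0.1698,0.1707]$ around $0.17$ that absorbs the integer-count rounding — this last numerical check is the single place where care is needed, and it is what forces thresholds marginally different from $0.17$ itself. Everything else is a one-line rearrangement of the $Beta$ mean and mode formulas.
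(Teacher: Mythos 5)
Your proof is correct, and it takes a genuinely different route from the paper's. Both arguments start from the same place---Beta--binomial conjugacy and the closed forms $\mu_i=\alpha_i/(\alpha_i+\beta_i)$, $m_i=(\alpha_i-1)/(\alpha_i+\beta_i-2)$---but the paper then bounds the posterior mean and posterior mode inside tight numerical intervals around $\hat x$ (approximately $[0.1698,0.1707]$ for the mean and $[0.1693,0.1701]$ for the mode), exploiting $n=1234$ being large, and reads the claim off those intervals; the thresholds $0.1698$ and $0.1707$ in the statement are exactly the endpoints that approximation produces. You instead write each posterior statistic as an exact strict convex combination of the corresponding prior statistic and $\hat x$, so that both the mean and the mode move strictly toward $\hat x$, and opposite-direction updating holds precisely when $\hat x$ separates $\mu_i$ from $m_i$. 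Your decomposition is cleaner and strictly stronger: it requires no large-$n$ approximation, it delivers an if-and-only-if characterization rather than a one-sided sufficient condition, and it quietly repairs a small looseness in the paper's own bounds (the paper's lower bound for the posterior mode, about $0.1693$, lies below the hypothesis threshold $0.1698$, so a prior mode in $(0.1693,0.1698)$ is not literally forced upward by the interval argument alone; your monotone-movement identity handles that case automatically). The only thing the paper's version buys is a direct derivation of the specific numbers $0.1698$ and $0.1707$; in your framework they are just convenient cutoffs flanking $\hat x$, and your rounding check ($k=210$, $\hat x=210/1234\approx 0.17018\in(0.1698,0.1707)$) correctly verifies that they do the job.
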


\begin{proof} In Appendix.\end{proof}

Claim \ref{claim: mean-mode} reinforces Claims \ref{claim: opp_dir} and \ref{claim: control-vs-treated}, while generalizing Figure \ref{fig1}: means and modes are often updated in opposite directions, and thus misinterpreting modes as means could reverse the relationship between participation choices and the direction of belief-update. 


We can use this set-up in a numerical exercise. Let us interpret the belief data from \cite{cyyz} as modal beliefs from a heterogenous $Beta$ distribution. As we observe different reports across individuals (which are the modes of their corresponding belief distributions), such beliefs must be heterogeneous across participants.\footnote{The only alternative would be to attribute these differences to measurement error, which would be unsatisfactory. The parametric assumptions then allow us to learn about the distributions of individual beliefs.} 

We model this heterogeneity by assuming a ``random coefficients" type structure on the parameters $\alpha_i, \beta_i$, yielding a Bayesian hierarchical model. In particular, to be consistent with the structure above, we assume $\alpha_i, \beta_i$ are i.i.d. across the population according to: $\alpha_i - 1 \sim \chi^2(\ell)$ and $\beta_i-1 \sim \chi^2(q)$, with the two distributions being independent. Next, the properties of $\chi^2$-distributions imply that the modal beliefs elicited in their experiment follow:
\begin{equation}
\frac{\alpha_i-1}{\alpha_i-1+\beta_i-1} \sim Beta\left(\frac{\ell}{2},\frac{q}{2}\right)
\label{mode_dist}
\end{equation}

Hence, the researcher with the dataset from \cite{cyyz} observes the distribution of modes across subjects, characterized by (\ref{mode_dist}), rather than individual belief distributions ($Beta(\alpha_i,\beta_i)$). Note that the former can be symmetric ($\ell=q$) even when individual beliefs are not ($\alpha_i \neq \beta_i$).\footnote{In footnote 25, \cite{cyyz} suggest mean and mode are likely to be close together. However, if they only have data on reported modes, they can only compare the mean and modes of the distribution of reported modes, rather than means and modes of the original distribution of individual beliefs.}

Identification of $\ell,q$ in (\ref{mode_dist}) follows immediately from the moments of the distribution of modes in the data. We estimate $\ell/2, q/2$ by Maximum Likelihood, yielding estimates $\hat{\ell}/2, \hat{q}/2$. Since $\ell,q$ are degrees of freedom, our estimates for them are given by their rounded up counterparts. The estimates are shown in Appendix Table \ref{tab:main}. Appendix Table \ref{tab:fit} shows that this parametrization fits the moments in the data well.

To calculate the share of beliefs with modes and means being updated in opposite directions (i.e. that satisfy the conditions outlined in Claim \ref{claim: mean-mode}, in which those two statistics are on opposite sides of 0.17), we draw $R=100000$ parameters from the joint estimated distribution of $\{(\alpha_i, \beta_i)\}_{i=1}^{n}$. We then compute the share of such draws that have the mode and mean on opposite sides of 17\%, and such that the mode $\pm 2$ percentage points has more mass than the mean $\pm 2$ percentage points. This check is needed given the authors' error tolerance.

The main result from this exercise is that, even in a simplified set-up, $34.8\%$ of drawn beliefs would have mean and modes updated in opposite directions in the original experiment where subjects would have preferred to report the mode rather than the mean. While this is merely illustrative, it does suggest that this may be a meaningful concern, and that the authors' original estimates could have had their signs reversed if they interpreted ``elicited beliefs" as modal beliefs.

\section{Concluding Remarks}

There has been increased popularity in using experiments to measure and vary individual beliefs, particularly in strategic settings. We discuss how belief elicitation schemes that look prima-facie similar, and that are widely used in experimental work, may identify different properties of the true probability distribution. In particular, we describe cases where incentive schemes rewarding subjects for guessing an outcome correctly incentivize reports of modal beliefs, mean beliefs or median beliefs. We also discuss the perils of not using an incentive scheme at all. While it is perfectly possible that, in certain settings, belief distributions are indeed symmetric and unimodal, in others they may not, which could imply that even the sign of identified effects may be switched if one interprets mean beliefs as modal beliefs and vice-versa. Hence, researchers should consider which scheme would elicit the appropriate data for their research question.

We illustrate these points using the experimental data in \cite{cyyz} within the context of political protests in Hong Kong. The authors provide an innovative way of examining how the decision to participate changes when there is an exogenous shock to citizen's beliefs. We make two main observations. First, our results imply that unless one assumes symmetric unimodal distributions, the elicitation method in \cite{cyyz} measures modal beliefs instead of the mean beliefs that the regression specification calls for. Second, we explain how interpreting modal beliefs as mean beliefs might lead researchers to erroneously reject strategic complementarity.
This suggests alternative interpretations to their findings which could be studied in future work.

\newpage
\bibliographystyle{apalike}
\bibliography{References}

\newpage

\appendix

\section{Proofs and Additional Results}\label{appA}

\subsection{Proof of Claim \ref{claim: mode}}

\begin{proof} (ii) Let $x$ denote the subject's report. The subject gets a payoff of 1 if and only if the true turnout $y \in [x-\delta, x+\delta]$. She believes $y$ is continuously distributed with density $f$.

Thus, her expected earning is $\int_{x-\delta}^{x+\delta}f(y)dy$. Now,
\begin{eqnarray*}
C&=&\frac{d}{dx}\left[\int_{x-\delta}^{x+\delta}f(y)dy\right] \\
 & = & f(x+\delta)-f(x-\delta).
\end{eqnarray*}
We are given that $f$ is increasing below $m$ and decreasing above $m$.  Thus, $x+\delta\leq m$ implies $\left(f(x+\delta)-f(x-\delta)\right)>0$ which implies $C$ is positive. Thus, at any $x+\delta\leq m$, increasing the report strictly increases the expected earning. Similarly,
$x-\delta\geq m$ implies $f(x+\delta)-f(x-\delta)<0$ which implies $C$ is negative. Thus, at any $x+\delta>m$, decreasing the report strictly increases the expected earning. Thus,
the optimal report must lie in $[m-\delta,m+\delta]$.

(iii) If $x$ is such that $(x-\delta,x+\delta)\in(m-\epsilon,m+\epsilon)$
but $x\notin[m-\delta,m+\delta]$, then reporting $x$ cannot be optimal:
this follows from part (ii). Else if, $x+\delta<m-\epsilon$ or $x-\delta>m+\epsilon$,
then, the interval $(x-\delta,x+\delta)$ must lie completely outside $(m-\epsilon,m+\epsilon)$,
and hence 
\begin{eqnarray*}
\left[\int_{x-\delta }^{x+\delta }f(y)dy\right] 
& \leq & \left[\int_{x-\delta}^{x+\delta}\max_{w\notin[m-\epsilon,m+\epsilon]}f(w)dy\right]\\
& \leq & \left[\int_{x-\delta}^{x+\delta}\min_{w\in[m-\epsilon,m+\epsilon]}f(w)dy\right]\\
& = & \left[\int_{m-\delta}^{m+\delta}\min_{w\in[m-\epsilon,m+\epsilon]}f(w)dy\right]\\
 & < & \left[\int_{m-\delta}^{m+\delta}f(y)dy\right]
\end{eqnarray*}
which implies that $x$ cannot be an optimal report. The equality in the third step follows from (i) $\min_{w\in[m-\epsilon,m+\epsilon]}f(w)$ is independent of the variable of integration and hence can be taken outside the integration sign, (ii) $(x+\delta)-(x-\delta) = (m+\delta)-(m-\delta)$ in the resulting integral.

Else if, $(x-\delta,x+\delta)$
lies partially in $(m-\epsilon,m+\epsilon)$ then it can be improved
upon similarly by reporting $m$ instead. 

Thus, the optimal report must lie in $[m-\delta,m+\delta]$.

(i) is omitted, as they are easy to show and analogous to (ii).
\end{proof}

\subsection{Proof of Claim \ref{claim: mean}}
\begin{proof}
\begin{eqnarray*}
r^{*} & = & \arg\max_{r\in[0,1]}\int_{0}^{1}\left(B-A(x-r)^{2}\right)f(x)dx\\
 & = & \arg\max_{r\in[0,1]}\left(B-A\int_{0}^{1}(x-r)^{2}f(x)dx\right)\\
 & = & \arg\min_{r\in[0,1]}\int_{0}^{1}(x-r)^{2}f(x)dx\\
 & = & \arg\min_{r\in[0,1]}\int_{0}^{1}\left((x-\bar{x})^{2}+2(x-\bar{x})(\bar{x}-r)+(\bar{x}-r)^{2}\right)f(x)dx\\
 & = & \arg\min_{r\in[0,1]}\int_{0}^{1}\left((\bar{x}-r)^{2}\right)f(x)dx\\
 & = & \bar{x}
\end{eqnarray*}
\end{proof}

\subsection{Proof of Claim \ref{claim: median}}

\begin{proof}
(i) The expected payoff from reporting $r$ is
$$\Pi(r)=\int_{0}^{1} (A-B|s-r|\Big)f(s)ds\;$$
Maximizing $\Pi(r)$ is equivalent to minimizing:
\begin{eqnarray*}
h(r)&=&\int_{0}^{1} |s-r|f(s)ds\\
        &=&\int_{0}^{r} (r-s)f(s)ds+\int_{r}^{1} (s-r)f(s)ds
\end{eqnarray*}
Therefore, 
\begin{eqnarray*}
\frac{d h(r)}{dr}&=&\int_{0}^{r} f(s)ds-\int_{r}^{1} f(s)ds\\
\end{eqnarray*}
$$h(r)\text{ is }\begin{cases}
\text{decreasing}&\text{if }r<M\\
\text{constant}&\text{if }r=M\\
\text{increasing}&\text{if }r>M\;.
\end{cases}$$

\noindent Hence, $h$ is minimized at and the expected profit is maximized at $r=M$.

(ii) Suppose that there are $n$ values in $[0,1]$, $s_1<s_2<\dots<s_n$, with $f(s_i)>0$. The expected payoff from reporting $r$ is
$$\Pi(r)=\sum_{k=1}^n\Big(A-B|s_k-r|f(s_k)\Big)\;$$
Maximizing $\Pi(r)$ is equivalent to minimizing $$h(r)=\sum_{k=1}^n(s_k-r)f(s_k)\;$$
If $r\leq s_1$, then $$h(r)=\sum_{k=1}^n|s_k-r|=\sum_{k=1}^n(s_k-r)\;$$ As $x$ increases, each term above decreases until $r$ reaches $s_1$, therefore $\Pi(s_1)>\Pi(r)$ for all $r< s_1$.

Now suppose that $s_k\le r\le r+d\le s_{k+1}$ for some $d>0$ small enough. Then,

\begin{eqnarray*}
h(r+d)&=&\sum_{i=1}^k\Big(r+d-s_i\Big)f(s_i)+\sum_{i=k+1}^n\Big(s_i-(r+d)\Big)f(s_i)\\
        &=&d\sum_{i=1}^k f(s_i)+\sum_{i=1}^k(r-s_i)f(s_i)-d\sum_{i=k+1}^n f(s_i)+\sum_{i=k+1}^n(s_i-r)f(s_i)\\
        &=&d\Big(\sum_{i=1}^k f(s_i)-\sum_{i=k+1}^n f(s_i)\Big) +h(r)
\end{eqnarray*}

Thus, $$h(r+d)-h(r)=d\Big(\sum_{i=1}^k f(s_i)-\sum_{i=k+1}^n f(s_i)\Big)$$
If $M(x)=\sum_{s_i \leq x} f(s_i)-\sum_{s_i \geq x} f(s_i)$, then,

$$h(x)\text{ is }\begin{cases}
\text{decreasing}&\text{if }M(x)<0\\
\text{constant}&\text{if }M(x)=0\\
\text{increasing}&\text{if }M(x)>0\;.
\end{cases}$$

Thus, $h(x)$ is minimal when $M(x)=0$.
\end{proof}

\subsection{Proof of Claim \ref{claim: mean-mode}}


\begin{proof} 
As $p\sim Beta(\alpha_i,\beta_i)$, prior beliefs for $i$ about the share of people attending the protests has mean $\mu_i = \frac{\alpha_i}{\alpha_i+\beta_i}$, mode $m_i = \frac{\alpha_i-1}{\alpha_i+\beta_i-2}$ and variance $\frac{\alpha_i \beta_i}{(\alpha_i+\beta_i)^2(\alpha_i+\beta_i+1)}$, which are well defined for $\alpha_i, \beta_i>1$. 

With a prior distribution $Beta(\alpha_i, \beta_i)$, the posterior distribution after observing the number of successes in the draw would be $Beta(\alpha_i',\beta_i')$ with $\alpha_i'=\alpha_i+\hat x n$ and $\beta'_i=\beta_i+(1-\hat x)n$ (neglecting rounding errors), since the Beta distribution is a conjugate prior to the binomial likelihood. Therefore, the updated mean is given by $\mu'=\frac{\alpha'}{\alpha'+\beta'}=\frac{\alpha+\hat xn}{\alpha+\beta+n}=\frac{\frac{\alpha}{n}+\hat x}{\frac{\alpha+\beta}{n}+1}\in[\frac{\hat x}{\frac{1}{n}+1},\frac{\frac{1}{n}+\hat x}{\frac{1}{n}+1}]$. 

After observing a sample of n=1234 observations, subject's posterior mean must lie within a tight interval $[0.999\hat x,0.9992\hat x+0.00081]$ centered around the observed frequency of successes $\hat x$. Similarly, one could approximate the mode to be contained in $[1.0008\hat x-0.00082,1.00082\hat x]$. Claim \ref{claim: mean-mode} follows from substituting $\hat x=0.17$.
\end{proof}

\subsection{Statement and Proof of Claim \ref{claim:beta}}

\begin{claim}
\label{claim:beta}
For a $Beta$ distribution density function $f(x;\alpha,\beta)$ with $\alpha,\beta>1$ and mode $m$, one has $f'(x)>0$ for $x<m$ and $f'(x)<0$ for $x>m$.
\end{claim}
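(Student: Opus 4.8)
The plan is to work with the logarithmic derivative of the Beta density, exploiting that the density is a product of powers whose logarithm is additively separable. Recall that on the open interval $(0,1)$ the density is $f(x;\alpha,\beta)=x^{\alpha-1}(1-x)^{\beta-1}/B(\alpha,\beta)$, which is strictly positive there; hence the sign of $f'(x)$ coincides with the sign of $(\ln f)'(x)=f'(x)/f(x)$. Differentiating $\ln f(x)=\mathrm{const}+(\alpha-1)\ln x+(\beta-1)\ln(1-x)$ gives $f'(x)/f(x)=(\alpha-1)/x-(\beta-1)/(1-x)$, so the whole problem reduces to determining the sign of this single expression.

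Next I would put that expression over the common denominator $x(1-x)$, which is strictly positive on $(0,1)$, so that the sign is governed entirely by the numerator $(\alpha-1)(1-x)-(\beta-1)x=(\alpha-1)-(\alpha+\beta-2)x$. This is an affine function of $x$ with slope $-(\alpha+\beta-2)$; since $\alpha,\beta>1$ we have $\alpha+\beta-2>0$, so it is strictly decreasing in $x$ and vanishes exactly at $x=(\alpha-1)/(\alpha+\beta-2)$. One then checks that this value lies in $(0,1)$ (again using $\alpha,\beta>1$) and that it is precisely the mode $m$ of $Beta(\alpha,\beta)$. Consequently the numerator, and therefore $f'(x)$, is strictly positive for $x<m$ and strictly negative for $x>m$, which is the asserted statement.

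There is essentially no difficult step; the only points requiring care are (i) recording that $f>0$ on the open support, so that dividing by $f$ is legitimate and the sign of $f'$ transfers faithfully to the logarithmic derivative, and (ii) verifying that $m=(\alpha-1)/(\alpha+\beta-2)\in(0,1)$, so that the two monotonicity regions $\{x<m\}$ and $\{x>m\}$ are genuinely nonempty sub-intervals of the support. Both are immediate consequences of $\alpha,\beta>1$, and the boundary behavior as $x\to 0^+$ or $x\to 1^-$ is irrelevant since the claim only concerns interior points.
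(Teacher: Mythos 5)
Your proof is correct and is essentially the same argument as the paper's: both reduce the sign of $f'(x)$ to the sign of the affine factor $(\alpha-1)-(\alpha+\beta-2)x$, which vanishes exactly at $m=(\alpha-1)/(\alpha+\beta-2)$ and is strictly decreasing since $\alpha+\beta-2>0$. The only difference is cosmetic --- you extract this factor via the logarithmic derivative while the paper differentiates the product directly and factors out $kx^{\alpha-2}(1-x)^{\beta-2}$ --- and your added care about positivity of $f$ on $(0,1)$ and about $m$ lying in the open interval is a welcome tightening rather than a departure.
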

\begin{proof}
$f(x;\alpha,\beta)=kx^{\alpha-1}(1-x)^{\beta-1}$ where $k$ is a constant. Therefore, 

\begin{eqnarray*}
f'(x)&=&k(\alpha-1)x^{\alpha-2}(1-x)^{\beta-1}-k(\beta-1)x^{\alpha-1}(1-x)^{\beta-2}\\
	&=&kx^{\alpha-2}(1-x)^{\beta-2}\left((\alpha-1)(1-x)-(\beta-1)x\right)\\
	&=&kx^{\alpha-2}(1-x)^{\beta-2}\left((\alpha-1)-(\alpha+\beta-2)x\right)\\
	&=&k(\alpha+\beta-2)x^{\alpha-2}(1-x)^{\beta-2}\left(\frac{\alpha-1}{\alpha+\beta-2}-x \right)
\end{eqnarray*}
As the mode of $f(x;\alpha,\beta)$ is $\frac{\alpha-1}{\alpha+\beta-2}$, the derivative $f'$ must be positive below the mode and negative above the mode.
\end{proof}

\newpage

\subsection{Results of the Numerical Exercise}

\FloatBarrier

\begin{table}[!htbp] \centering 
  \caption{Estimates of the parameters for the distribution of the modes, $Beta(\ell/2,q/2)$.} 
  \label{tab:main} 
\begin{tabular}{@{\extracolsep{5pt}}lccc} 
\\[-1.8ex]\hline \\[-1.8ex] 
\\[-1.8ex] &  & \multicolumn{2}{c}{Sample} \\ 
 &  & Treatment Only & Full Sample \\ 
\hline \\[-1.8ex] 
\textit{Data from First Survey} &&&\\
\\
  $\ell$/2 & &     0.232 &     0.172 \\ 
   & & [0.230, 0.235] & [0.171, 0.174] \\ 
	\\
  q/2 & & 1.383 &  1.192 \\
& & [1.333, 1.436] & [1.142, 1.245] \\ 
\hline \\[-1.8ex] 
\\
  $\mathbb{E}\alpha_i$  & & 2 & 2 \\ 
 $Var(\alpha_i)$ & & 6 & 6\\ 
\\
 $\mathbb{E}\beta_i $ & & 4 & 4 \\ 
 $Var(\beta_i)$ & & 10 & 10 \\ 
\hline \\[-1.8ex] 
\end{tabular} 
\parbox[c]{6.2in}{%
{\footnotesize{}Notes: We present the Maximum Likelihood estimates for the parameters for the distribution of the modes in the first survey, $Beta(\ell/2,q/2)$. $\ell,q$ are the hyperparameters for the distributions of individual beliefs $Beta(\alpha_i,\beta_i)$, where $\alpha_i-1 \sim \chi^2(\ell), \beta_i-1\sim \chi^2(q)$. It follows that $\mathbb{E}\alpha_i = 1+\ell$, $\mathbb{E}\beta_i = 1+q$. We present estimates using only treated subjects and using all subjects in \cite{cyyz}. Each panel shows the results from separately estimating the distributions of modes in the first and second surveys. 95\% confidence intervals in brackets (for $\ell/2,q/2$).}}
\end{table} 

As we can see from Panel A, the average $\alpha_i, \beta_i$ are clearly statistically different, so the modes and means for subjects are not the same on average.  Table \ref{tab:fit} below shows that this simple model fits key moments of the data.

\begin{table}[!htbp] \centering 
  \caption{Model Fit: How our simple model matches the distribution of the modes in the data} 
  \label{tab:fit} 
\begin{tabular}{@{\extracolsep{5pt}}lcc} 
\\[-1.8ex]\hline \\[-1.8ex] 
 & Estimated (at the average) & Data\\ 
\\[-1.8ex] Moments & & \\ 
\hline \\[-1.8ex] 
\\
    Average Mode &         0.144 & 0.140 \\ 
    Variance of Modes & 0.032 & 0.024 \\ 
\hline \\[-1.8ex] 
\hline \\[-1.8ex] \end{tabular} 
\parbox[c]{6.2in}{%
{\footnotesize{}Notes: We present the model fit from our estimates for the distribution of beliefs, which is assumed to be $Beta(\alpha_i,\beta_i)$, with $\alpha_i \sim \chi^2(p), \beta_i \sim \chi^2(q)$. We compute model fits at the average value of $\alpha_i, \beta_i$, given by the estimated counterpart to $1+p, 1+q$.}}
\end{table}  

\FloatBarrier

\clearpage

\section{Overview of the Experiment in \cite{cyyz}}\label{appB}

The experiment in \cite{cyyz} is composed by three parts. In this section, we summarize the main features that are relevant to our note.

\subsection*{Part 1: June 24, 2016 (1 week before the July 1 planned protest)}

The authors recruited participants via an email to the entire undergraduate population of HKUST on June 24, 2016. They had 1741 respondents, with 1576 of them considered as Hong Kong natives (their target population). Respondents received an average of HK\$205 (around US\$25), for completing this survey.

In this stage, the authors elicit subjects' planned participation, as well as subjects' beliefs regarding other subjects' planned protest participation, using the following two questions:

\medskip

\begin{itemize}
\item \textbf{Own Report:}\\ \textit{``Are you planning to participate in the July 1st march in 2016?"}

\medskip

The possible answers were: (i) ``Yes", (ii) ``Not sure yet, but more likely than not", (iii) ``Not sure yet, but more unlikely than yes", (iv) ``No".\\

\medskip

\item \textbf{Belief Elicitation:} \\ \textit{``Please guess what percentage of the participants from HKUST of this study plan to participate in the July 1st march in 2016 (answer either ``Yes" or ``Not sure yet, but more likely than not" to the above question on July 1st March in 2016)}

\medskip

\textit{If your guess is within 2 percentage points of the percent of students who actually answer either ``Yes" or ``Not sure yet, but more likely than not," you will earn a bonus payment of HK\$10.}
\end{itemize}

\medskip

16.9\% of subjects (whether Hong Kong natives or not) answered that they planned to attend the protest. The authors use this number, rounded up to 17\%, as the informational treatment in Part 2.

\subsection*{Part 2: June 30, 2016 (1 day before the July 1 planned protest)}

A second online survey elicited beliefs about turnout in the following day's protest through a short online survey. 1303 students responded. Students received a payment of HK\$25 for completing the survey. 

In this survey, only a random subset of subjects (the ``Treated" Group), received the following information:

\medskip

\begin{itemize}
\item
\textbf{Informational Treatment:} \\ \textit{``Recall that you guessed that [Part 1 response]\% of HKUST survey participants would plan to attend the July 1 march. Based on last week's survey, the true percentage of survey participants who plan to attend the July 1 march is 17\%."}
\end{itemize}

\medskip

However, all subjects (whether treated, who received the information above, or control, who did not) were reminded of their Part 1 responses above, and were allowed to update their answers. In particular, after the reminder, they were asked:

\medskip

\begin{itemize}
\item
\textbf{Report Updating:}
\textit{``Perhaps since then your views have changed. We now ask you again to provide guesses about actual attendance of the July 1 march. Instead of your guesses in the previous survey, we will use today's guesses to determine your bonus payment.}

\medskip

\textit{Please guess what percentage of the participants from HKUST of this study will participate in the July 1st march in 2016?
If your guess is within 2 percentage points of the percent of students who actually participate, you will earn a bonus payment of HK\$10."}
\end{itemize}



\subsection*{Part 3: July 15, 2016 (2 weeks after the July 1 planned protest)}

The last part of the experiment involved a third online survey sent via email two weeks after the protest. It was completed by 1234 Hong Kong native students (this is the final sample). Students who completed this third survey received an additional payment of HK\$25. The main question was:

\medskip

\begin{itemize}
\item
``Did you attend the July 1, 2016 March?" 
\end{itemize}

A response of ``Yes"' to this question was the authors' main outcome measure. 

\newpage

\section{A Stylized Model}\label{subsec:counter}
	
Subjects in our toy model make two choices: They choose between the decisions of attend the protest ($P_i = 1$), or not attend ($P_i = 0$). And they choose some value to report, $r_i$, about others' participation, given the researchers' payment scheme $w$. To derive how subjects would report and update their beliefs and participation decisions, we start by explicitly assuming their utility function and beliefs about $P_{-i}$. 

\medskip

\textbf{Utility:} The utility of subject $i$ from her action $P_i \in\{0, 1\}$
is given by 
\begin{equation}
\label{eq: utility}
u_i(P_i, P_{-i}, r_i)=P_i (P_{-i}-0.165) + w 1_{\{r_i \in [P_{-i}-0.02, P_{-i}+0.02]\}}
\end{equation}
where $P_{-i}$ is the \textit{realized proportion} of other subjects who also participate in the protest, and $1\{.\}$ is the indicator function. This utility function provides a simple way to model three relevant aspects. First, the utility from protesting is increasing in the number of co-protestors and, hence, we have a case of strategic complementarity. Second, based on the incentive system offered in \cite{cyyz}, subjects receive a payoff $w$ if their report, $r_i$, is within 2 percentage points of the realized $P_{-i}$. Third, that only subjects whose mean belief of $P_{-i}$ exceeds 0.165 would want to protest.

\medskip

\textbf{Prior Beliefs:} Suppose half of the population starts
with a prior that the proportion of other survey participants showing
up is 
\begin{eqnarray}
\label{eq: belief1}
b^{1}(P_{-i}) & = & \begin{cases}
0 & \text{with probability }0.30\\
0.12 & \text{with probability }0.10\\
0.17 & \text{with probability }0.15\\
0.22 & \text{with probability }0.10\\
0.30 & \text{with probability }0.35
\end{cases}
\end{eqnarray}
The other half of the population starts with a prior that the proportion
of other survey participants showing up is 
\begin{eqnarray}
\label{eq: belief2}
b^{2}(P_{-i}) & = & \begin{cases}
0.05 & \text{with probability }0.35\\
0.08 & \text{with probability }0.20\\
0.12 & \text{with probability }0.20\\
0.17 & \text{with probability }0.25
\end{cases}
\end{eqnarray}

The belief distributions are only special to the extent that they allow for high and low prior modes in the two groups respectively, which can potentially get revised in opposite directions. Also, by meeting the sufficient conditions in part (i) of Claim \ref{claim: mode} under $\delta = 0.02$, the distributions ensure that the first and second groups would report their respective prior modes $0.30$ and $0.05$ as the proportion of other survey participants showing up. Additionally, the prior means over $P_{-i}$  are 0.1645\footnote{We purposely chose the mean belief of 0.1645 and the 0.165 cut-off in the participation utility to construct our example, as they relate to empirical values in \cite{cyyz}.} and 0.10 respectively. 

As both prior means are below 0.165, none of these subjects would be willing to participate in the protest. 

\medskip

\textbf{Informational Intervention and Belief Updating:} Next, the experimenters reveal the information \textit{``Based on last week's survey, the true percentage of survey participants who plan to attend
the July 1 March is 17\%.''} We assume that the subjects interpret the information provided by the experimenters as a signal that is drawn uniformly from $[\theta-10,\theta+10]$, where $\theta$ is the true percentage
of subjects showing up for the protest.\footnote{This is how information structures are often specified in the global games literature. Such noisy interpretation is also consistent with the empirical evidence: most subjects in \cite{cyyz} update little relative to the information available from the relatively large subsample of $n>1000$ subjects, see their Figure 2 and Claim \ref{claim: mean-mode} below.}

Given our assumption on how the subjects
interpret the signal, the updated posteriors for the two groups are, respectively,
\begin{eqnarray*}
b^{1'}(P_{-i}) & = & \begin{cases}
0.12 & \text{with probability }\ensuremath{\frac{0.10}{0.35}}\\
0.17 & \text{with probability }\frac{0.15}{0.35}\\
0.22 & \text{with probability }\frac{0.10}{0.35}
\end{cases}
\end{eqnarray*}
and
\begin{eqnarray*}
b^{2'}(P_{-i}) & = & \begin{cases}
0.08 & \text{with probability }\frac{0.20}{0.65}\\
0.12 & \text{with probability }\frac{0.20}{0.65}\\
0.17 & \text{with probability }\frac{0.25}{0.65}
\end{cases}
\end{eqnarray*}

Now, when experimenters ask about the percentage of participants, subjects in both groups report 0.17, given their incentives to report posterior modes. However, the posterior means over $P_{-i}$ calculated from their posterior beliefs are 0.17 and 0.127 respectively. The prior and posterior means and modes of $P_{-i}$ are summarized in Table \ref{summary}, which allows us to derive the claims in the main text.

\end{document}